\newtheorem{theorem}{\bf Theorem}[section]
\newtheorem{lemma}[]{\bf Lemma}[section]
\newtheorem{definition}[theorem]{\bf Definition}
\newcolumntype{P}[1]{>{\centering\arraybackslash}p{#1}}
\date{}
\begin{document}
%\\date{07/06/2022}
\title{\textbf{ HYPERGRAPH REPRESENTATION IN BRAIN NETWORK ANALYSIS}}
	\author{Anagha P, Selvakumar R$^{*}$ \\\\ Department of Mathematics, Vellore Institute of Technology, Vellore 632014, India\\\\
	Email: anuanagha01@gmail.com, rselvakumar@vit.ac.in$^{*}$}
\maketitle

\vspace{-0.5cm}
\begin{abstract}
For the study of functional aspects of the brain network. This paper is a study on the hypergraph representation, based on the functional regions of the brain network. A new parameter that can measure how many multifunctioning regions each function contains and thereby the correlation of other functions with each function. 

Keywords: Brain network; Hyperedge degree; Hyper Zagreb Indices; Small-world network.
\end{abstract} 

\section{Introduction}

The human brain is the most intricately connected network ever discovered by mankind. The human brain is made up of approximately $10^{11}$ neurons that are connected by approximately $10^{14}$ synapses. In the light of graph theory, brain networks are made up of vertices (nodes) and edges, where vertices stand in for neurons or regions of the brain and edges stand in for the connections that are either structural or functional between vertices \cite{1,2}.\\
Studies on humans indicate that modular brain networks improve cognitive performance. The modularity of a network is a structural measure that evaluates how well the network can be partitioned into smaller sub-networks (also called groups, communities, or clusters ). As higher modularity reflects a larger number of intra-module connections and fewer inter-module connections, it is commonly believed that a highly modular brain consists of highly specialized brain networks with less integration across networks. Recent research on both younger and elderly individuals has demonstrated that preexisting differences in the modularity of brain networks can predict post-intervention performance improvements \cite{3,4}.\\
The first step in creating a brain network is defining the nodes and edges of the network.
The brain network edges show the connectivity between brain areas. The connectivity of the brain network can be classified as structural, functional, or effective connectivity.
Functional connectivity is a statistical association between brain regions and physiological or neurophysiological signals \cite{5,6}. Topological indices are important numerical quantities that reflect various connectivity properties of the brain network \cite{7,8}. There are several topological indices that are based on different things like eccentricity, degree, distance, and so on \cite{9,10,11}.
\\
The brain network can be modeled and analyzed using hypergraph representation.  Hypergraphs, compared to standard graphs, can represent more complex relationships between vertices than just connections or edges \cite{12}. Since hypergraphs are capable of reflecting complex relationships between nodes (brain regions), they can be used to model and analyse brain networks. The analysis of functional connectivity is a crucial use of hypergraphs in the study of brain networks \cite{12,13}. Functional connectivity describes the relationships between the levels of activity in various brain regions. By enabling numerous brain regions to be connected at once by a hyperedge, rather than just pairings of brain regions as in standard graphs, hypergraphs can aid in the capturing of complex functional relationships.\\
For example: Assume that $A$, $B$, and $C$ are neurons or brain regions, and that $A$, $B$, and $C$ share the same function.
If a standard graph were to depict this situation, only two of the three regions would have edges connecting them at once, resulting in a complete graph. But a hyperedge that represents the function includes all three in hypergraph representation.\\
Overall, hypergraphs provide a powerful tool for modeling and analyzing the intricate relationships between brain regions, allowing for a deeper understanding of neural activity and cognition.\\

\section{Hypergraph Topological Indices}

This section introduces a new parameter, hyperedge degree $d_h(\epsilon)$. It is a parameter that depends on the degree (connected to various functions) to which each vertex of this hyperedge. What is a region's involvement of different functions in the brain is more essential than what brain regions are connected to a function. Using this parameter, it is possible to determine which brain regions have an effect on brain function and to use this information for future brain research. \\
A brain network can be represented as a hypergraph with brain regions or neurons serving as vertices and brain functions as hyperedges. $d_h(\epsilon)$ will be high if certain brain areas or neurons involved in a given function $\epsilon$ involve more than one function or if there are more connections between $\epsilon$ and other hyperedges.\\
Novel topological indices based on hyoergraph degrees of some popular graphs are defined and discussed in this section. Also hypergraph degrees and new topological indices values for some family of graph with small-world organisation is studied. The fact that human brain networks prominently display small-world organisation is one of the most important results. This network architecture in the brain (the result of natural selection acting under the pressure of a cost-efficiency balance) enables the efficient segregation and integration of information with minimal wiring and energy costs.
Additionally, the small-world organisation experiences ongoing modifications as part of normal growth and ageing and shows significant changes in neurological and mental illnesses \cite{14}. \\
For the study's convenience, each hyperedge was treated as a complete graph and its $d_h(\epsilon)$ values were computed.\\

\begin{definition}
Let $d_h(v_i)$ is the number of $v$ contained hyperedges $\epsilon$ of $H$, then  $d_h(\epsilon)=\sum_{v_i\in \epsilon} d_h(v_i)-|\epsilon|$ is the hyperedge degree of $\epsilon$.
\end{definition}
\begin{definition}
The hyper first zagreb index and hyper first general zagreb index are defined as, $HFGZI(H)=\sum_{\forall \epsilon\in E} d_h(\epsilon)$ and $HM_1(H)=\sum_{\forall \epsilon\in E} d_h^2(\epsilon)$ respectively.
\end{definition}

\begin{lemma}
Here hyperedge is vertex subset $V(K_n)$, where $K_n$ represents complete graph. The hypergraph topological indices of some popular graphs are:
\begin{itemize}
    \item Let $K_n$ be a complete graph with $n$ vertices. Then $d_h(\epsilon)=(n-1)(n-2) $ $\forall \epsilon\in K_n$ and therefore $HFGZI(K_n)=n(n-1)(n-2)$ and $HM_1(K_n)=n(n-1)^2(n-2)^2$.
    \item Let $C_n$ be a cycle graph with $n$ vertices.Then $n(\epsilon)=n$ and $d_h(\epsilon)=2$ $ \forall \epsilon\in C_n$. So, $HFGZI(C_n)=2n$ and $HM_1 (C_n)=4n$.
    \item Let $T$ be a tree, then $d_h(\epsilon)=N(u)+N(v)-2 $ $\forall \epsilon\in T$, where $u,v \in \epsilon$ and $u\neq v$. So, $HFGZI(T)=\sum_{uv\in E(T)} (N(u)+N(v)-2) $ and $HM_1(T)=\sum_{uv\in E(T)} (N(u)+N(v)-2)^2$.\\
    In particular, 
         \begin{itemize}
\item Let $P_n$ be a path with $n$ vertices, then $d_h(\epsilon)=\begin{cases}
       1 & \text{;if $\epsilon$ is an end edge}\\
       2 & \text{; otherwise}\\
    \end{cases}$.\\ Therefore $HFGZI(P_n)=2(n-2)$ and $HM_1(P_n)= 2+4(n-3)$.
    \item $HFGZI(S_r)=r(r-1)$ and $HM_1(S_r)=r(r-1)^2$ where $S_r$ is a star graph of $r+1$ vertices and $d_h(\epsilon)=r-1$.
    \end{itemize}
\end{itemize}
\end{lemma}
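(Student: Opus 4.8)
The plan is to reduce every formula to the single master identity coming from Definition~1. Writing $d_h(\epsilon)=\sum_{v\in\epsilon} d_h(v)-|\epsilon|=\sum_{v\in\epsilon}\bigl(d_h(v)-1\bigr)$, I would first isolate the one fact that drives all the cases: in each of the hypergraph models below the number of hyperedges containing a vertex equals its ordinary graph degree, i.e.\ $d_h(v)=\deg(v)$ (the paper's $N(v)$). This holds because each function/hyperedge is treated as a complete subgraph, so a region $v$ participates in exactly as many hyperedges as it is connected to. Consequently the working formula throughout is $d_h(\epsilon)=\sum_{v\in\epsilon}\bigl(\deg(v)-1\bigr)$, and $HFGZI$ and $HM_1$ are then just the sum and the sum of squares of these quantities taken over all hyperedges.

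For $K_n$ I would take the $n$ hyperedges to be the neighbourhoods of the vertices, each of size $n-1$, every vertex having degree $n-1$. The master formula gives $d_h(\epsilon)=(n-1)\bigl((n-1)-1\bigr)=(n-1)(n-2)$ for every $\epsilon$; summing this constant value over the $n$ hyperedges yields $HFGZI(K_n)=n(n-1)(n-2)$, and summing its square yields $HM_1(K_n)=n(n-1)^2(n-2)^2$.

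For the remaining graphs the hyperedges are the edges (size-$2$ vertex subsets), so the master formula specialises to $d_h(\{u,v\})=\deg(u)+\deg(v)-2$, which is exactly the stated tree identity; $HFGZI(T)$ and $HM_1(T)$ are then the corresponding sums over $E(T)$. The cycle $C_n$ is $2$-regular with $n$ edges, giving $d_h(\epsilon)=2+2-2=2$ on each of the $n$ hyperedges, hence $HFGZI(C_n)=2n$ and $HM_1(C_n)=4n$. For $P_n$ I would split the $n-1$ edges into the $2$ end edges (one endpoint of degree $1$, one of degree $2$, so $d_h=1$) and the $n-3$ interior edges (both endpoints of degree $2$, so $d_h=2$), giving $HFGZI(P_n)=2\cdot1+(n-3)\cdot2=2(n-2)$ and $HM_1(P_n)=2\cdot1+(n-3)\cdot4=2+4(n-3)$. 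For $S_r$ each of the $r$ edges joins the centre (degree $r$) to a leaf (degree $1$), so $d_h(\epsilon)=r+1-2=r-1$, whence $HFGZI(S_r)=r(r-1)$ and $HM_1(S_r)=r(r-1)^2$.

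The delicate step, and the one I would be most careful with, is pinning down the hyperedge incidence structure in each case before invoking the master identity, since $d_h$ depends entirely on that structure: note that the complete graph is handled through vertex neighbourhoods ($n$ hyperedges of size $n-1$) whereas the sparse graphs are handled through their edges (size-$2$ hyperedges), and only after fixing this does $d_h(v)=\deg(v)$ let the computation proceed uniformly. The remaining care is with small-order boundary cases: the $P_n$ formulas presuppose $n\ge3$ so that end and interior edges are genuinely distinct, and the star count assumes $r\ge2$; these should be recorded as standing hypotheses.
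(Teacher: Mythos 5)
Your proposal is correct and takes essentially the same approach as the paper: you fix the same hyperedge structure (the $n$ copies of $K_{n-1}$, equivalently the vertex neighbourhoods, for $K_n$, and the edges $K_2$ for $C_n$, trees, paths and stars) and then apply the defining identity $d_h(\epsilon)=\sum_{v\in\epsilon}d_h(v)-|\epsilon|$. The paper's proof states this structural choice and declares the rest ``obvious,'' whereas you carry out those computations explicitly and add the small-order hypotheses ($n\ge 3$ for $P_n$), which is a harmless elaboration rather than a different method.
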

\begin{proof}
In case of $K_n$, $K_{n-1}$ is the hyperedge. In case of $C_n$ and tree $T$, each edge $K_2$ is the hyperedge. So, the result is obvious.
   
\end{proof}

The structural and functional networks of the human brain are organized in a small-world structure. The small-world model quantifies the separation and integration of information. Individual cognition is captured by the small-world paradigm, which also has a physiological basis. So now the new parameter value and indices for the graph with small-world organisation are going to be discussed here. This section simplifies calculation by treating complete subgraphs as hyperedges.\\

\begin{lemma}
Let $G \cong W_p^q $ (Windmill graph) then number of hyperedges in $G$, $n(E)=q$ and
$d_h (\epsilon)= q-1 $ $\forall \epsilon\in G$.
\end{lemma}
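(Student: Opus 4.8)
The plan is to exploit the rigid structure of the windmill graph $W_p^q$, which consists of $q$ copies of the complete graph $K_p$ all glued together at a single shared apex vertex. Following the convention adopted in this section, each such complete subgraph $K_p$ is treated as one hyperedge, so there are exactly $q$ hyperedges and the count $n(E)=q$ is immediate; moreover each hyperedge $\epsilon$ satisfies $|\epsilon|=p$. First I would fix notation: write $c$ for the shared apex vertex and, for each $i=1,\dots,q$, write the $i$-th clique as $\epsilon_i$ with vertex set $\{c\}\cup\{v_{i,1},\dots,v_{i,p-1}\}$, where the $v_{i,j}$ are the $p-1$ peripheral vertices belonging only to that blade.

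The key computation is the vertex hyper-degree $d_h(v)$ of Definition 1, namely the number of hyperedges containing $v$. The apex vertex $c$ lies in every one of the $q$ cliques, so $d_h(c)=q$, whereas every peripheral vertex $v_{i,j}$ lies in exactly one clique, so $d_h(v_{i,j})=1$. I would then substitute these values into the hyperedge-degree formula $d_h(\epsilon)=\sum_{v\in\epsilon}d_h(v)-|\epsilon|$. For any fixed hyperedge $\epsilon_i$ the sum splits into the apex contribution and the $p-1$ peripheral contributions, giving $\sum_{v\in\epsilon_i}d_h(v)=q+(p-1)\cdot 1=q+p-1$; subtracting $|\epsilon_i|=p$ yields $d_h(\epsilon_i)=(q+p-1)-p=q-1$, and since $i$ was arbitrary this holds for all $\epsilon\in G$, as claimed.

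A cleaner way to see why the clique size $p$ drops out is to rewrite the formula as $d_h(\epsilon)=\sum_{v\in\epsilon}(d_h(v)-1)$, which simply counts, summed over the vertices of $\epsilon$, the hyperedges \emph{other than} $\epsilon$ meeting $\epsilon$. Only through the apex $c$ can such overlaps occur, and there are exactly $q-1$ of them, while each peripheral vertex contributes $d_h(v_{i,j})-1=0$. This argument has no genuine obstacle; it is a direct count, and the only subtlety I would flag explicitly is that the apex is counted with multiplicity $q$ whereas every other vertex contributes a single unit, so the $-|\epsilon|$ correction is precisely what removes the self-membership of $\epsilon$ and leaves a value depending only on $q$.
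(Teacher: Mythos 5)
Your proposal is correct and follows essentially the same route as the paper's proof: treat each blade $K_p$ as a hyperedge (so $n(E)=q$), observe that the apex has vertex hyper-degree $q$ while every peripheral vertex has hyper-degree $1$, and substitute into $d_h(\epsilon)=\sum_{v\in\epsilon}d_h(v)-|\epsilon|$ to get $q-1$. The only difference is that you spell out the intermediate arithmetic $(q+p-1)-p$ and add the tidy reformulation $d_h(\epsilon)=\sum_{v\in\epsilon}(d_h(v)-1)$, both of which the paper leaves implicit.
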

\begin{proof}
The total number of hyperedges in $W_p^q$ is $q$. So,\\
$d_h(v)=\begin{cases}
       q & \text{;if $v$ is the center}\\
       1 & \text{; otherwise}\\
    \end{cases}$ and hence\\
$ d_h(\epsilon)=\sum_{v\in \epsilon} d_h(v)-|\epsilon|=q-1$
\end{proof}
\begin{theorem}
If $G$ be $W_p^q$ then $HFGZI(G)=q(q-1)$ and $HM_1 (G)=q(q-1)^2$.
\end{theorem}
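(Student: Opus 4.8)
The plan is to reduce both identities to a direct substitution into the definitions of $HFGZI$ and $HM_1$, using the structural facts already recorded in the preceding lemma for the windmill graph $W_p^q$. Recall that $HFGZI(G)=\sum_{\epsilon\in E} d_h(\epsilon)$ and $HM_1(G)=\sum_{\epsilon\in E} d_h^2(\epsilon)$; thus each index is simply a sum over all hyperedges of a function of the single scalar $d_h(\epsilon)$. The entire task is therefore to evaluate these two sums once the summand and the index set $E$ are pinned down.

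First I would invoke the preceding lemma, which supplies exactly the two ingredients needed: the number of hyperedges is $n(E)=q$, and every hyperedge has the same degree $d_h(\epsilon)=q-1$. The key observation is that the summand is constant across all hyperedges, a consequence of the windmill's symmetry, in which every blade (a complete subgraph treated as a hyperedge) meets the others only at the single central vertex. Each blade then contains the center, of hyperedge-degree $q$, together with $p-1$ peripheral vertices each of hyperedge-degree $1$, giving $d_h(\epsilon)=q-1$ independently of which blade is chosen and independently of $p$.

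With this uniformity in hand, both sums collapse to $q$ copies of a fixed value, yielding $HFGZI(G)=\sum_{\epsilon\in E}(q-1)=q(q-1)$ and $HM_1(G)=\sum_{\epsilon\in E}(q-1)^2=q(q-1)^2$, with no case analysis or per-vertex bookkeeping surviving into this final step. The only point worth verifying carefully, and the single place an error could enter, is that the lemma's degree value is genuinely the same for every hyperedge rather than varying with the blade size $p$; once that uniformity is confirmed the two claimed formulas are immediate. Accordingly I do not expect any real obstacle here, since the substantive work was already carried out in the lemma and the theorem is essentially its arithmetic corollary.
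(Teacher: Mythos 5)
Your proof is correct and follows the same route as the paper: the paper's own proof simply cites the preceding lemma (that $W_p^q$ has $q$ hyperedges, each with $d_h(\epsilon)=q-1$) and sums over the hyperedges, exactly as you do. Your extra verification that $d_h(\epsilon)=q+(p-1)\cdot 1-p=q-1$ is uniform across blades just re-derives that lemma's computation, so nothing in your argument diverges from the paper's.
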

\begin{proof}
Result is obvious from lemma(2.2)
\end{proof} 

\begin{lemma}
Let $G \cong F_{r,s}$ then number of hyperedges in $G$, $n(E)=r+s$ and
$d_h (\epsilon)= r+s-1 $ $\forall \epsilon\in G$
\end{lemma}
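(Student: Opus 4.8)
The plan is to follow the same template used in the proof of Lemma~2.2 for the windmill graph $W_p^q$, since $F_{r,s}$ shares the essential feature that all of its complete-subgraph blocks meet in a single common vertex. First I would make the hyperedge structure explicit: under the convention adopted in this section, each maximal complete subgraph (block) of $F_{r,s}$ is taken to be one hyperedge $\epsilon$, and from the construction of $F_{r,s}$ there are exactly $r+s$ such blocks, which immediately gives $n(E)=r+s$.

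Next I would compute the vertex hyperedge degree $d_h(v)$ for every vertex. The key structural observation is that the $r+s$ blocks are pairwise vertex-disjoint except that they all contain a single central vertex $c$. Consequently $c$ lies in all $r+s$ hyperedges, so $d_h(c)=r+s$, while every other (peripheral) vertex belongs to exactly one block, giving $d_h(v)=1$. This is the step that demands the most care: one must verify from the definition of $F_{r,s}$ that no peripheral vertex is shared between two distinct blocks, so that the value $1$ is correct and no extra contributions slip in. I expect this disjointness bookkeeping to be the only genuine obstacle; everything afterward is pure cancellation.

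With these degrees in hand, the conclusion follows by a direct application of the definition of hyperedge degree. Fix any hyperedge $\epsilon$ and write $|\epsilon|=k$, so that $\epsilon$ consists of the center $c$ together with $k-1$ peripheral vertices. Then
\[
d_h(\epsilon)=\sum_{v\in\epsilon} d_h(v)-|\epsilon|=\bigl((r+s)+(k-1)\bigr)-k=r+s-1.
\]
Since the block size $k$ cancels, the value is independent of the chosen hyperedge and of the individual block sizes, yielding $d_h(\epsilon)=r+s-1$ for all $\epsilon\in G$, exactly as in the windmill case where the same cancellation produced $q-1$.
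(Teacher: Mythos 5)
Your proof is correct and takes essentially the same route as the paper's: treat the $r$ triangles and $s$ pendant edges as the $r+s$ hyperedges, compute $d_h(v)=r+s$ for the center and $d_h(v)=1$ for every peripheral vertex, and apply the definition of $d_h(\epsilon)$. The only difference is cosmetic --- you handle both block types at once via a generic block size $k$ that cancels, whereas the paper verifies the $K_3$ and $K_2$ cases separately.
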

\begin{proof}
Since $F_{r,s}$ contains $r$ triangles (means $K_3$) and s pendent edges (means $K_2$) and each complete graph is an hyperedge, the total number of hyperedges is $r+s$. So,\\
$d_h(v)=\begin{cases}
       r+s & \text{;if $v$ is the center}\\
       1 & \text{; otherwise}\\
    \end{cases}$ and hence \\
    $\begin{array}{ll}   d_h(\epsilon)&=\sum_{v\in \epsilon} d_h(v)-|\epsilon|\\
&=\begin{cases}
       1+1+(r+s)-3 & \text{;if $\epsilon$ is $K_3$}\\
       1+r+s-2 & \text{; if $\epsilon$ is $K_2$}\\
    \end{cases}\\
    &=\begin{cases}
       r+s-1 & \text{;if $\epsilon$ is $K_3$}\\
       r+s-1 & \text{; if $\epsilon$ is $K_2$}\\
    \end{cases}\\
\end{array}$\\
\end{proof}
\begin{theorem}
If $G$ be $F_{r,s}$ then $HFGZI(G)=(r+s)(r+s-1)$ and $HM_1 (G)=(r+s)(r+s-1)^2$.
\end{theorem}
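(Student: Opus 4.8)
The plan is to invoke the preceding lemma directly, since it already supplies everything needed: both the number of hyperedges of $F_{r,s}$ and the common value of their hyperedge degree. By that lemma, $F_{r,s}$ has exactly $n(E)=r+s$ hyperedges, and every hyperedge $\epsilon$ satisfies $d_h(\epsilon)=r+s-1$. The only remaining task is to feed these two facts into the definitions of the two indices.

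First I would recall the definition $HFGZI(G)=\sum_{\epsilon\in E} d_h(\epsilon)$. Since the summand $d_h(\epsilon)$ is constant and equal to $r+s-1$ across all $r+s$ hyperedges, the sum collapses into $(r+s)$ identical terms, yielding $HFGZI(G)=(r+s)(r+s-1)$. Next, for $HM_1(G)=\sum_{\epsilon\in E} d_h^2(\epsilon)$, I would apply the same constant-summand observation: each of the $r+s$ terms equals $(r+s-1)^2$, so that $HM_1(G)=(r+s)(r+s-1)^2$.

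I do not expect any genuine obstacle at this stage. The real combinatorial content---separating the $r$ triangles ($K_3$) from the $s$ pendant edges ($K_2$), assigning $d_h(v)=r+s$ to the center and $d_h(v)=1$ to every leaf, and then verifying that both edge types produce the \emph{same} value $d_h(\epsilon)=r+s-1$---has already been carried out in the preceding lemma. Consequently the theorem reduces to pure substitution, exactly paralleling the windmill-graph case (Theorem 2.1), whose proof likewise rested entirely on its own degree lemma.

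If anything, the single point worth a second glance is the uniformity of $d_h(\epsilon)$ over both the $K_3$- and $K_2$-hyperedges: it is precisely this uniformity that lets each sum telescope into a single product $(r+s)\cdot(r+s-1)$ or $(r+s)\cdot(r+s-1)^2$. Were the two hyperedge types to carry different degrees, one would instead have to split the sum into a triangle part and a pendant part; here that complication does not arise, and the closed forms follow immediately.
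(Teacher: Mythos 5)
Your proposal is correct and follows exactly the same route as the paper, which simply declares the result ``obvious from Lemma 2.3'': both rely on that lemma's facts that $F_{r,s}$ has $r+s$ hyperedges each of degree $r+s-1$, and then substitute into the definitions of $HFGZI$ and $HM_1$. Your write-up merely makes explicit the constant-summand substitution that the paper leaves implicit.
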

\begin{proof}
Result is obvious from lemma(2.3)
\end{proof} 

\begin{lemma}
Let $G \cong W_n$ (Wheel graph with $n$ vertices) then number of hyperedges in $G$, $n(E)=n$ and $d_h (\epsilon)= n+1 $ $\forall \epsilon\in G$.
\end{lemma}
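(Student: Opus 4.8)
The plan is to follow exactly the template established for the windmill graph in Lemma~(2.2) and for $F_{r,s}$ in Lemma~(2.3), since the wheel graph shares the same ``one hub incident to many complete blocks'' structure. First I would fix the geometry of $W_n$ explicitly: a single central hub vertex $c$ joined to every vertex of a rim cycle $v_1, v_2, \dots, v_n$, with indices taken modulo $n$. Because this section treats each complete subgraph as a hyperedge, the relevant hyperedges are precisely the triangles $\epsilon_i = \{c, v_i, v_{i+1}\}$, each isomorphic to $K_3$. Since the rim is an $n$-cycle, it contributes exactly $n$ consecutive edges $v_i v_{i+1}$, and each such edge yields one triangle with the hub; hence $n(E)=n$.

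The key computational step is determining $d_h(v)$ for each vertex, that is, the number of hyperedges containing $v$. The hub $c$ lies in every triangle, so it is contained in all $n$ hyperedges, giving $d_h(c)=n$. Each rim vertex $v_i$ lies only in the two triangles built on the two rim edges incident to it, namely $\{c, v_{i-1}, v_i\}$ and $\{c, v_i, v_{i+1}\}$, so $d_h(v_i)=2$. This bookkeeping---the hub in all blocks, each rim vertex in exactly two---is the only place where an error could realistically creep in, so I would flag it as the crux of the argument, just as the case analysis on the center was the heart of the two preceding lemmas.

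Finally, I would substitute into the defining formula $d_h(\epsilon)=\sum_{v\in\epsilon} d_h(v)-|\epsilon|$ for a typical hyperedge $\epsilon=\{c, v_i, v_{i+1}\}$, which is a $K_3$ with $|\epsilon|=3$:
\[
d_h(\epsilon) = \sum_{v\in\epsilon} d_h(v) - |\epsilon| = d_h(c) + d_h(v_i) + d_h(v_{i+1}) - 3 = n + 2 + 2 - 3 = n+1 .
\]
Since every hyperedge has this form, $d_h(\epsilon)=n+1$ for all $\epsilon\in G$, which is exactly the claim. I do not anticipate any genuine obstacle: the argument is purely a matter of correctly counting hyperedge memberships and substituting into the definition, and it closes in a single line once the two vertex-degree values $d_h(c)=n$ and $d_h(v_i)=2$ are in hand.
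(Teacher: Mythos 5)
Your proposal is correct and follows essentially the same route as the paper's own proof: treat the $n$ triangles $\{c, v_i, v_{i+1}\}$ as the hyperedges, observe that $d_h(\text{center})=n$ and $d_h(v_i)=2$ for rim vertices, and substitute into $d_h(\epsilon)=\sum_{v\in\epsilon} d_h(v)-|\epsilon|=n+2+2-3=n+1$; you merely spell out the indexing that the paper leaves implicit. Note that, like the paper's proof, your reading takes the rim to have $n$ vertices (so $n+1$ vertices in total), which is consistent with the paper's internal convention even though the lemma's wording says ``wheel graph with $n$ vertices.''
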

\begin{proof}
Since wheel graph contains $n$ triangles (means $K_3$) and each complete graph is an hyperedge, the total number of hyperedges is $n$. So,\\
$d_h(v)=\begin{cases}
       n & \text{;if $v$ is the center}\\
       2 & \text{; otherwise}\\
    \end{cases}$ and hence\\
$d_h(\epsilon)=\sum_{v\in \epsilon} d_h(v)-|\epsilon|
=2+2+n-3
=n+1$
\end{proof}
\begin{theorem}
If $G$ be a Wheel graph $W_n$ then $HFGZI(G)=n(n+1)$ and $HM_1 (G)=n(n+1)^2$.
\end{theorem}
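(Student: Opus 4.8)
The plan is to derive both indices directly from Lemma 2.4, which already supplies the two ingredients the definitions require: the number of hyperedges and the common hyperedge degree. Specifically, Lemma 2.4 establishes that in the wheel graph $W_n$ the hyperedge set has cardinality $n(E)=n$ and that every hyperedge satisfies $d_h(\epsilon)=n+1$. With these two facts in hand, the theorem reduces to substitution into the two summation formulas of Definition 2.

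First I would invoke the definition of the hyper first general Zagreb index, $HFGZI(H)=\sum_{\epsilon\in E} d_h(\epsilon)$. Because the summand $d_h(\epsilon)=n+1$ is constant across all $n$ hyperedges, the sum collapses to a product, giving $HFGZI(W_n)=n\cdot(n+1)=n(n+1)$. Next, applying the definition $HM_1(H)=\sum_{\epsilon\in E} d_h^2(\epsilon)$ in exactly the same manner, each term equals $(n+1)^2$ and there are $n$ such terms, so $HM_1(W_n)=n\cdot(n+1)^2=n(n+1)^2$.

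There is essentially no obstacle at the level of this theorem: the entire computational content has been pushed back into Lemma 2.4, whose argument checks that the center vertex lies in all $n$ triangular hyperedges (so $d_h(\text{center})=n$) while each rim vertex lies in exactly two of them (so $d_h=2$), which yields the uniform value $d_h(\epsilon)=2+2+n-3=n+1$. The only point requiring mild care is that the hyperedge degree must genuinely be the same for every hyperedge; if it varied, the two sums would not factor so cleanly into a simple product. Here the rotational symmetry of $W_n$—each spoke-triangle being interchangeable with every other—guarantees this uniformity, so the proof is a one-line appeal to Lemma 2.4, in keeping with the pattern of the preceding two theorems.
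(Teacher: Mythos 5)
Your proposal is correct and follows exactly the paper's route: the paper's proof is a one-line appeal to Lemma 2.4, and you simply spell out the substitution of $n(E)=n$ and the constant value $d_h(\epsilon)=n+1$ into the defining sums for $HFGZI$ and $HM_1$. The extra remark about uniformity of the hyperedge degree is a sound observation but adds nothing beyond what Lemma 2.4 already guarantees.
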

\begin{proof}
Result is obvious from lemma(2.4)
\end{proof} 
 
\section{Graph Operations}
To construct a large network from small networks and viceversa, graph operations are helpful. Graph operations join, cartesian product, corona products and composition are defined as, the cartesian product $G_1\times G_2$ of graphs $G_1$ and $G_2$ is a graph with vertex set $V(G_1\times G_2)=V(G_1)\times V(G_2)$ and $(a,x)(b,y)$ is an edge of $G_1\times G_2$ if $a=b$ and $xy \in G_2$, or $ab \in E(G_1)$ and $x=y$;
the join $G_1+G_2$ of graphs $G_1$ and $G_2$ is a graph with vertex set $V(G_1)\cup V(G_2)$ and edge set $E(G_1)\cup E(G_2)\cup \{uv; u\in V(G_1)$ and $v\in V(G_2)\} $;
the composition $G_1\circ G_2$ of graphs $G_1$ and $G_2$ with disjoint vertex sets $V(G_1)$ and $V(G_2)$ and edge sets $E(G_1)$ and $E(G_2)$ is the graph with vertex set $V(G_1)\times V(G_2)$ and $u=(u_1,v_1)$ is adjacent to $v=(u_2,v_2)$ whenever $u_1$ is adjacent to $u_2$ or $u_1=u_2$ and $v_1$ is adjacent to $v_2$; 
The corona product $G_1\odot G_2$ is defined as the graph obtained from $G_1$ and $G_2$ by taking one copy of $G_1$ and $|V(G_1)|$ copies of $G_2$ and then joining by an edge each vertex of the $i$th copy of $G_2$ is named $(G_2,i)$ with the $i$th vertex of $G_1$ \cite{18,19}.\\

Cartesian product of any two complete graphs $G_1$ and $G_2$ results in a graph with hyperedges collection of $G_1$ and $G_2$. 
\begin{lemma}
    Let $G_1=K_n$ and $G_2=K_m$ then cartesian product $G=G_1\times G_2$ of hypergraphs $G_1$ and $G_2$ is a hypergraph with vertex set $V(G)=V(G_1) \times V(G_2)$ and edge set \\ $E(G)=\{E(G_1)(m $ times$),E(G_2)(n$ times$)\}$.
   % $E(G)=\{K_n,...,K_n(m times),K_m,...,K_m(n times)\}$
\end{lemma}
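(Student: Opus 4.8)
The plan is to obtain the vertex set directly from the definition of the Cartesian product and then to pin down the hyperedges by classifying the maximal complete subgraphs of $G = K_n \times K_m$. The vertex-set claim $V(G) = V(G_1)\times V(G_2)$ is immediate from the definition of $\times$, so all of the content lies in identifying the hyperedge collection. First I would specialize the adjacency rule to complete factors: two vertices $(a,x)$ and $(b,y)$ are adjacent iff $a=b$ with $xy\in E(G_2)$, or $ab\in E(G_1)$ with $x=y$; since both factors are complete this reduces to the statement that $(a,x)\sim(b,y)$ exactly when the pair agrees in one coordinate and differs in the other.

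Next I would exhibit the two natural families of cliques. Fixing the first coordinate $a$ gives the set $R_a=\{(a,x):x\in V(K_m)\}$, a clique isomorphic to $K_m=G_2$, and there are $n$ such ``rows''; fixing the second coordinate $x$ gives $C_x=\{(a,x):a\in V(K_n)\}$, a clique isomorphic to $K_n=G_1$, and there are $m$ such ``columns.'' Reading each row as a copy of $E(G_2)$ and each column as a copy of $E(G_1)$ already produces the claimed multiset $\{E(G_1)\ (m\text{ times}),\,E(G_2)\ (n\text{ times})\}$, and counting gives $n(E)=n+m$.

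The main step, and the only nontrivial one, is to check that the rows and columns are \emph{all} the maximal complete subgraphs, so that treating complete subgraphs as hyperedges yields exactly this collection. I would argue by contradiction: a clique containing $(a,x)$ and $(b,y)$ with $a\neq b$ and $x\neq y$ would require these two vertices to be adjacent, but by the specialized adjacency rule they differ in both coordinates and hence are non-adjacent. Thus every clique is confined to a single $R_a$ or a single $C_x$, making the $R_a$ and $C_x$ precisely the maximal cliques. I expect no real difficulty beyond this diagonal-pair argument; I would close by noting that the rows and columns are not vertex-disjoint, since each $(a,x)$ lies in exactly one row and one column, so $E(G)$ is genuinely the stated multiset of overlapping hyperedges rather than a partition of the vertices.
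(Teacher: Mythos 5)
Your proposal is correct, and it supplies substantially more than the paper does: the paper's entire proof of this lemma is the single sentence ``From definition of hypergraph and cartesian product of graphs,'' with the hyperedge collection simply asserted. Your argument fills in the actual content. The key point, which the paper never makes explicit, is that under this section's convention of treating complete subgraphs as hyperedges, one must verify that the rows $R_a \cong K_m$ and the columns $C_x \cong K_n$ are \emph{all} of the maximal cliques of $K_n \times K_m$; your diagonal-pair observation (two vertices differing in both coordinates are non-adjacent in the Cartesian product of complete graphs) is exactly the right tool. One small step you should spell out: from ``no clique contains a diagonal pair'' you conclude directly that every clique lies inside a single row or a single column, but this needs one more line. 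If a clique is not contained in any row, it has two vertices $(a,x),(b,x)$ with $a \neq b$; then any further vertex $(c,z)$ of the clique, being adjacent to both while avoiding diagonal pairs, satisfies the following: if $z \neq x$, adjacency to $(a,x)$ forces $c=a$ and adjacency to $(b,x)$ forces $c=b$, a contradiction, so $z=x$ and the clique lies in the column $C_x$. With that sentence added your proof is complete and rigorous, and your closing remark that the rows and columns overlap --- so that $E(G)$ is a multiset of hyperedges rather than a partition of the vertex set --- is a useful clarification that the paper omits but implicitly relies on when it later computes $d_h(v)=2$ for every vertex.
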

\begin{proof}
    From definition of hypergraph and cartesian product of graphs
\end{proof}
\begin{theorem}
     Let $G=G_1\times G_2$ be cartesian product of hypergraphs where $G_1=K_n$ and $G_2=K_m$ then $G$ contains $n+m$ hyperedges and 
     $d_h(\epsilon)=\begin{cases}
       n & \text{;if $\epsilon$ is $K_n$}\\
       m & \text{;if $\epsilon$ is $K_m$}\\
    \end{cases}$  and $HFGZI(G)=2|V(G_1)||V(G_2)|$ and $HM_1(G)=|V(G_1)||V(G_2)|(|V(G_1)|+|V(G_2)|)$
\end{theorem}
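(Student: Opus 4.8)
The plan is to read off the hyperedge structure from Lemma~3.1 and then apply the definition of $d_h(\epsilon)$ term by term. By that lemma, the Cartesian product $G = K_n \times K_m$ carries the hyperedges obtained from $m$ copies of the complete graph $K_n$ (one for each fixed vertex of $G_2$) together with $n$ copies of $K_m$ (one for each fixed vertex of $G_1$), so the total hyperedge count is $m + n = n + m$, as claimed. I would record at the outset that $|V(G)| = |V(G_1)||V(G_2)| = nm$, since this is the quantity in which the final formulas are expressed.

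First I would compute the vertex degrees $d_h(v)$. A typical vertex of $G$ is a pair $(a,x)$ with $a \in V(K_n)$ and $x \in V(K_m)$. Fixing the second coordinate to $x$ places $(a,x)$ in exactly one $K_n$-hyperedge, and fixing the first coordinate to $a$ places it in exactly one $K_m$-hyperedge; no other hyperedge contains it. Hence every vertex lies in precisely two hyperedges, giving $d_h(v) = 2$ for all $v \in V(G)$. This uniformity is what makes the subsequent computation clean.

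Next I would substitute into the definition $d_h(\epsilon) = \sum_{v \in \epsilon} d_h(v) - |\epsilon|$. A hyperedge $\epsilon$ that is a copy of $K_n$ has $|\epsilon| = n$, so $d_h(\epsilon) = 2n - n = n$; a hyperedge that is a copy of $K_m$ has $|\epsilon| = m$, so $d_h(\epsilon) = 2m - m = m$. This establishes the stated case split for $d_h(\epsilon)$.

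Finally, the two indices follow by summation over the $m + n$ hyperedges. For $HFGZI$ I would add $n$ for each of the $m$ copies of $K_n$ and $m$ for each of the $n$ copies of $K_m$, obtaining $mn + nm = 2nm = 2|V(G_1)||V(G_2)|$. For $HM_1$ I would square first, giving $m \cdot n^2 + n \cdot m^2 = nm(n+m) = |V(G_1)||V(G_2)|(|V(G_1)| + |V(G_2)|)$. The only genuine point of care, rather than a real obstacle, is the bookkeeping of multiplicities: one must pair the $K_n$-hyperedges with the factor $m$ and the $K_m$-hyperedges with the factor $n$ (not the reverse), and confirm that the two families exhaust all hyperedges so that no vertex is counted in a spurious third edge. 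Once $d_h(v)=2$ is justified from the grid structure, everything else is a direct substitution.
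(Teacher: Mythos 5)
Your proposal is correct and follows essentially the same route as the paper's own proof: invoke Lemma~3.1 for the hyperedge decomposition into $m$ copies of $K_n$ and $n$ copies of $K_m$, observe that $d_h(v)=2$ for every vertex, apply the definition $d_h(\epsilon)=\sum_{v\in\epsilon}d_h(v)-|\epsilon|$, and sum with the correct multiplicities. Your explicit justification of $d_h(v)=2$ from the grid structure is a small addition the paper merely asserts, but the argument is otherwise identical.
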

\begin{proof}
From lemma(3.1), clear that $n(E)=n+m$.
Here  $E(G)=\{K_n,...,K_n(m $ times$), K_m,...,K_m(n $ times$)\}$, $d_h(v)=2 $ $\forall v \in G$ and $d_h(\epsilon)=\sum_{v\in \epsilon} d_h(v)-|\epsilon|$. Therefore
$d_h(K_n)= 2+2+...+2(n$ times$)-n=2n-n=n$ and $d_h(K_m)= 2+2+...+2(m $ times$)-m=2m-m=m$. So,\\
 $d_h(\epsilon)=\begin{cases}
       n & \text{;if $\epsilon$ is $K_n$}\\
       m & \text{;if $\epsilon $is $K_m$}\\
    \end{cases}$
    \\
    $\begin{array}{ll}HFGZI(G)&=\sum_{\forall \epsilon \in G_1\times G_2} d_h(\epsilon)\\
    &=\sum_{\forall K_n} d_h(\epsilon)+\sum_{\forall K_m} d_h(\epsilon)\\
    &= m(2n-n)+n(2m-m)\\
    &=2nm\\
    &=2|V(G_1)||V(G_2)|\\
    \end{array}$\\

    $\begin{array}{ll} HM_1(G)&=\sum_{\forall \epsilon \in G_1\times G_2} d_h^2(\epsilon)\\
    &=\sum_{\forall K_n} d_h^2(\epsilon)+\sum_{\forall K_m} d_h^2(\epsilon)\\
    &= m(2n-n)^2+n(2m-m)^2\\
    &=nm(n+m)\\
    &=|V(G_1)||V(G_2)|(|V(G_1)|+|V(G_2)|)\\
    \end{array}$\\
\end{proof}

\begin{lemma}
Join product $G=G_1+G_2$ of hypergraphs $G_1$ and $G_2$ is a hypergraph with vertex set $V(G)=V(G_1)\cup V(G_2)$ and edge set $E(G)=\{\epsilon+ \epsilon^*; \forall \epsilon \in E(G_1) $ and $ \epsilon^*\in E(G_2)\} $.
\end{lemma}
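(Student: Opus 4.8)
The plan is to verify both asserted equalities directly from the definition of the join $G_1 + G_2$ recalled in Section 3, together with the convention used throughout this section that a hyperedge is a maximal complete subgraph regarded as a vertex subset. First I would dispose of the vertex-set claim: since the join retains the two original vertex sets and only adds edges between them, $V(G) = V(G_1) \cup V(G_2)$ is immediate. The real content is therefore the description of $E(G)$.

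For the edge set, the crucial structural observation is that the join inserts every possible cross edge $uv$ with $u \in V(G_1)$ and $v \in V(G_2)$. Hence a vertex set $S \subseteq V(G)$ induces a complete subgraph of $G$ if and only if $S \cap V(G_1)$ induces a complete subgraph of $G_1$ and $S \cap V(G_2)$ induces a complete subgraph of $G_2$: the forward implication is clear, and conversely any two vertices of $S$ are adjacent either because they lie in a common factor (which is then complete on $S$) or because they straddle the two factors, in which case the join supplies the edge. Writing $\epsilon = S \cap V(G_1)$ and $\epsilon^* = S \cap V(G_2)$, this says precisely that every hyperedge of $G$ has the form $\epsilon + \epsilon^*$ with $\epsilon \in E(G_1)$ and $\epsilon^* \in E(G_2)$, and that conversely each such join of hyperedges is complete in $G$. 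This yields $E(G) = \{\epsilon + \epsilon^* \;;\; \epsilon \in E(G_1),\ \epsilon^* \in E(G_2)\}$.

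The step I expect to require the most care is the maximality bookkeeping in the edge-set identity. A complete subgraph $\epsilon + \epsilon^*$ counts as a genuine hyperedge only when it is maximal, so one should check that the maximal cliques of $G$ correspond exactly to pairs of maximal cliques of the two factors, and handle the degenerate cases in which $\epsilon$ or $\epsilon^*$ reduces to a single vertex (an isolated vertex of a factor contributing the trivial clique). Once this correspondence is pinned down, the lemma follows, and it sets up the subsequent computation of $d_h(\epsilon)$ and of the Zagreb-type indices for the join, exactly as the analogous Lemma~(3.1) did for the Cartesian product.
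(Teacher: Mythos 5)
Your argument is correct, but be aware that the paper contains no proof of this lemma at all: the statement is given bare, and in context it functions as the \emph{definition} of how the join acts on hypergraphs (compare the Cartesian-product Lemma 3.1, whose entire proof reads ``From definition of hypergraph and cartesian product of graphs''). So you have not reproduced the paper's reasoning; you have supplied reasoning where the paper has none. Your route --- read the Section 3 convention as ``hyperedges are maximal complete subgraphs'' and prove that the maximal cliques of $G_1+G_2$ are exactly the sets $\epsilon\cup\epsilon^*$ with $\epsilon$ a maximal clique of $G_1$ and $\epsilon^*$ a maximal clique of $G_2$ --- is sound, and the one step you defer (the ``maximality bookkeeping'') is routine: if some $u\in V(G_1)\setminus\epsilon$ were adjacent in $G_1$ to every vertex of $\epsilon$, then $u$ would be adjacent to all of $\epsilon\cup\epsilon^*$, since the join supplies every edge from $u$ to $V(G_2)$, contradicting maximality of $\epsilon\cup\epsilon^*$; conversely, maximality of both traces blocks any extension, and when both factors are nonempty every maximal clique of the join necessarily meets both factors. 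This buys something the paper lacks: it derives, rather than postulates, the all-pairs structure of $E(G)$ that the subsequent theorem silently relies on when it sets $n(E(G))=n_1n_2$ and replaces $d_h(v)$ by $n_2\,d_h(v)$ for $v\in V(G_1)$. One caveat: your reading works only under the maximal-clique convention, and the paper's own usage is not consistent on this point --- in Lemma 2.1 the hyperedges of $K_n$ are its $(n-1)$-element vertex subsets, which are not maximal cliques --- and for abstract hypergraphs $G_1,G_2$ with arbitrary hyperedge sets $E(G_1),E(G_2)$ the statement cannot be proved at all, only adopted as a definition, which is evidently how the authors intend it.
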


\begin{theorem}
Let the hypergraph $G=G_1+G_2$ be join of $G_1$ and $G_2$ and $\epsilon = \epsilon'+\epsilon^*$ be a hyperedge of $G_1 + G_2$, then $G$ contains $n_1n_2$ hyperedges where $n_1$ is 
the number of hyperedges in $G_1$ and $ n_2$ is the number of hyperedges in $G_2$ and $d_h(\epsilon)=n_2(d_h(\epsilon')+|\epsilon'|)+n_1(d_h(\epsilon^*+|\epsilon^*|)$, where $ \epsilon' \in E(G_1) $ and $ \epsilon^* \in E(G_2)$ and $HFGZI(G)=n_2^2 HFGZI(G_1)+n_1^2 HFGZI(G_2)+n_2(n_2-1) \sum_{\forall \epsilon'} |\epsilon'|+n_1(n_1-1) \sum_{\forall \epsilon^*} |\epsilon^*|$.

\end{theorem}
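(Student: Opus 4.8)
The plan is to reduce everything to the ordinary hyperedge-degree definition (Definition 2.1) together with the explicit description of the join's edge set from Lemma 3.2. First I would compute how the vertex degrees behave under the join. By Lemma 3.2 every hyperedge of $G$ has the form $\epsilon' + \epsilon^*$ with $\epsilon' \in E(G_1)$ and $\epsilon^* \in E(G_2)$, and its vertex set is the disjoint union of those of $\epsilon'$ and $\epsilon^*$. Hence a vertex $v \in V(G_1)$ lies in $\epsilon' + \epsilon^*$ exactly when $v \in \epsilon'$, and since $\epsilon^*$ ranges freely over all $n_2$ hyperedges of $G_2$, the number of hyperedges of $G$ containing $v$ is $d_h^{G}(v) = n_2\, d_h^{G_1}(v)$; symmetrically $d_h^{G}(v) = n_1\, d_h^{G_2}(v)$ for $v \in V(G_2)$. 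The hyperedge count $n_1 n_2$ then follows immediately, since the pairs $(\epsilon', \epsilon^*)$ are in bijection with $E(G)$.

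Next I would compute $d_h(\epsilon)$ for a fixed hyperedge $\epsilon = \epsilon' + \epsilon^*$ by splitting the defining sum in Definition 2.1 according to which factor each vertex comes from:
\begin{align*}
d_h(\epsilon) &= \sum_{v \in \epsilon'} d_h^{G}(v) + \sum_{v \in \epsilon^*} d_h^{G}(v) - |\epsilon| \\
&= n_2 \sum_{v \in \epsilon'} d_h^{G_1}(v) + n_1 \sum_{v \in \epsilon^*} d_h^{G_2}(v) - (|\epsilon'| + |\epsilon^*|).
\end{align*}
The key identity here, again from Definition 2.1, is $\sum_{v \in \epsilon'} d_h^{G_1}(v) = d_h(\epsilon') + |\epsilon'|$ and likewise for $\epsilon^*$; substituting these gives $d_h(\epsilon) = n_2(d_h(\epsilon') + |\epsilon'|) + n_1(d_h(\epsilon^*) + |\epsilon^*|) - |\epsilon'| - |\epsilon^*|$, so that once the $-|\epsilon|$ term is absorbed the coefficient of $|\epsilon'|$ is $n_2 - 1$ and that of $|\epsilon^*|$ is $n_1 - 1$.

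Finally, for the index I would sum $d_h(\epsilon)$ over all $n_1 n_2$ hyperedges, writing the sum as a double sum over $\epsilon' \in E(G_1)$ and $\epsilon^* \in E(G_2)$. The term $n_2\, d_h(\epsilon')$ is independent of $\epsilon^*$, so summing over the $n_2$ choices of $\epsilon^*$ multiplies it by $n_2$ and then summing over $\epsilon'$ yields $n_2^2\, HFGZI(G_1)$; the $\epsilon^*$-term gives $n_1^2\, HFGZI(G_2)$ in the same way. The two length contributions collapse to $n_2(n_2-1)\sum_{\epsilon'}|\epsilon'|$ and $n_1(n_1-1)\sum_{\epsilon^*}|\epsilon^*|$, assembling into the claimed expression for $HFGZI(G)$.

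I expect the main obstacle to be the careful bookkeeping of the $-|\epsilon|$ subtraction: it is exactly this term that turns the naive coefficient $n_2^2$ on $\sum_{\epsilon'}|\epsilon'|$ into $n_2(n_2-1)$ (and $n_1^2$ into $n_1(n_1-1)$), so I would state the hyperedge-degree formula with the subtraction made explicit before passing to the index, to avoid an off-by-$\sum|\epsilon|$ error. Everything else is a routine interchange of summation order, using the fact that each inner sum runs over a fixed number ($n_1$ or $n_2$) of hyperedges.
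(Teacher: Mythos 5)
Your proposal is correct and takes essentially the same route as the paper's own proof: compute the vertex degrees in the join ($n_2 d_h(v)$ for $v\in V(G_1)$, $n_1 d_h(v)$ for $v\in V(G_2)$), deduce $d_h(\epsilon'+\epsilon^*)=n_2 d_h(\epsilon')+n_1 d_h(\epsilon^*)+(n_2-1)|\epsilon'|+(n_1-1)|\epsilon^*|$ via Definition 2.1, and sum over all $n_1n_2$ pairs --- your interchange of summation is just a tidier presentation of the paper's term-by-term expansion. Note also that, exactly like the paper's proof, your degree formula keeps the $-|\epsilon'|-|\epsilon^*|$ correction that the printed theorem statement drops, so your derivation matches the proof (and silently fixes that typo) rather than the statement as written.
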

\begin{proof}
    Let $G_1$ contains $n_1$ hyperedges and $G_2$ contains $n_2$ hyperedges then number of hyperedges in G, 
    $n(E(G))=n(E(G_1+G_2))=n(E(G_1))\times n(E(G_2))=n_1n_2$ and  
    $d_h(V)=\begin{cases}
    n_2 d_h(v) & \text{; if $v\in V(G_1)$}\\
    n_1 d_h(v) & \text{; if $v\in V(G_2)$}\\
    \end{cases}$. \\
    Let $\epsilon'_1,\epsilon'_2,...,\epsilon'_{n_1}$ are hyperedges of $G_1$ and $\epsilon^*_1,\epsilon^*_2,...,\epsilon^*_{n_2}$ are hyperedges of $G_2$, then\\ $E(G)=E(G_1+G_2)=\{(\epsilon'_1+\epsilon_1^*),(\epsilon'_1+\epsilon_2^*),...,(\epsilon'_1+\epsilon_{n_2}^*), (\epsilon'_2+\epsilon_1^*),(\epsilon'_2+\epsilon_2^*),...,(\epsilon'_2+\epsilon_{n_2}^*),..., (\epsilon'_{n_1}+\epsilon_1^*),(\epsilon'_{n_1}+\epsilon_2^*),...,(\epsilon'_{n_1}+\epsilon_{n_2}^*)$.
    Let $\epsilon' \in E(G_1)$ and $\epsilon^*\in E(G_2)$ then\\
 $\begin{array}{ll}  d_{h_{G_1+G_2}}(\epsilon)&=d_h(\epsilon'+\epsilon^*);  \epsilon' \in G_1,  \epsilon^*\in G_2\\
 &=\sum_{V\in V(\epsilon'+\epsilon^*)} d_h(V)-|\epsilon'+\epsilon^*|\\
 &=n_2 \sum_{v\in V(\epsilon')} d_h (v)+n_1 \sum_{v^*\in V(\epsilon^*)} d_h(v^*)-|\epsilon'|-|\epsilon^*|\\
 &=n_2 d_h(\epsilon')+n_1 d_h(\epsilon^*)+(n_2-1)|\epsilon'|+(n_1-1)|\epsilon^*|
 \end{array}$\\
  $\begin{array}{ll} HFGZI(G_1+G_2)&=\sum_{\epsilon\in E(G_1+G_2)} d_h(\epsilon)\\
  &=\sum_{\forall \epsilon' \in E(G_1), \epsilon^*\in E(G_2)} d_h(\epsilon'+\epsilon^*)\\
  &= n_2 (d_h(\epsilon'_1)+|\epsilon'_1|)+n_1 (d_h(\epsilon^*_1)+|\epsilon^*_1|-(|\epsilon'_1|+|\epsilon^*_1|))+n_2 (d_h(\epsilon'_1)+|\epsilon'_1|)\\& +n_1 (d_h(\epsilon^*_2)+|\epsilon^*_2|)-(|\epsilon'_1|+|\epsilon^*_2|)+...+n_2 (d_h(\epsilon'_1)+|\epsilon'_1|)+n_1 (d_h(\epsilon^*_{n_2})+|\epsilon^*_{n_2}|)\\& -(|\epsilon'_1|+|\epsilon^*_{n_2}|)+n_2 (d_h(\epsilon'_2)+|\epsilon'_2|)+n_1 (d_h(\epsilon^*_1)+|\epsilon^*_1|)-(|\epsilon'_2|+|\epsilon^*_1|)\\& +n_2 (d_h(\epsilon'_2)+|\epsilon'_2|)+n_1 (d_h(\epsilon^*_2)+|\epsilon^*_2|)-(|\epsilon'_2|+|\epsilon^*_2|)+...+n_2 (d_h(\epsilon'_2)+|\epsilon'_2|)\\& +n_1 (d_h(\epsilon^*_{n_2})+|\epsilon^*_{n_2}|)-(|\epsilon'_2|+|\epsilon^*_{n_2}|)+...+n_2 (d_h(\epsilon'_{n_1})+|\epsilon'_{n_1}|)+n_1 (d_h(\epsilon^*_1)+|\epsilon^*_1|)\\& -(|\epsilon'_{n_1}|+|\epsilon^*_1|)+n_2 (d_h(\epsilon'_{n_1})+|\epsilon'_{n_1}|)+n_1 (d_h(\epsilon^*_2)+|\epsilon^*_2|)-(|\epsilon'_{n_1}|+|\epsilon^*_2|)\\& +...+n_2 (d_h(\epsilon'_{n_1})+|\epsilon'_{n_1}|)+n_1 (d_h(\epsilon^*_{n_2})+|\epsilon^*_{n_2}|)-(|\epsilon'_{n_1}|+|\epsilon^*_{n_2}|)\\
  &=n_2^2 \sum_{i=1}^{n_1}( d_h(\epsilon'_i)+|\epsilon'_i|)+n_1^2 \sum_{j=1}^{n_2}( d_h(\epsilon^*_j)+|\epsilon^*_j|)-(n_2\sum_{i=1}^{n_1}|\epsilon'_i|+ n_1\sum_{j=1}^{n_2}|\epsilon^*_j|)\\
  &=n_2^2 \sum_{\forall \epsilon' \in E(G_1)}( d_h(\epsilon')+|\epsilon'|)+n_1^2 \sum_{\forall \epsilon^* \in E(G_2)}( d_h(\epsilon^*)+|\epsilon^*|)\\& -(n_2\sum_{\forall \epsilon' \in E(G_1)}|\epsilon'| + n_1\sum_{\forall \epsilon^* \in E(G_2)}|\epsilon^*|)\\
  &=n_2^2 HFGZI(G_1)+n_1^2 HFGZI(G_2)+n_2(n_2-1) \sum_{\forall \epsilon'} |\epsilon'|+n_1(n_1-1) \sum_{\forall \epsilon^*} |\epsilon^*|\\
  \end{array}$\\
\end{proof}

\begin{lemma}
Let $G_1=S_r$ and $G_2=K_n$ then corona product $G=G_1\odot G_2$ of $G_1$ and $G_2$ is a hypergraph with edge set $E(G)=\{K_{n+1}((r+1)$ times$),K_2(r$ times$)\}$ and $|V(G)|=(n+1)(r+1)$.
\end{lemma}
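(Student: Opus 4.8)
The plan is to unfold the definition of the corona product for these specific factors and then identify the maximal complete subgraphs, since the paper's convention treats each complete (clique) subgraph as a single hyperedge. First I would fix notation for $S_r$: let $c$ denote the center and $\ell_1,\dots,\ell_r$ the $r$ pendant vertices, so that $|V(S_r)|=r+1$. By the definition of the corona product, $G=S_r\odot K_n$ is assembled from one copy of $S_r$ together with $r+1$ copies of $K_n$, say $K_n^{(0)},K_n^{(1)},\dots,K_n^{(r)}$, where $K_n^{(0)}$ is attached to $c$ and $K_n^{(i)}$ is attached to $\ell_i$, and each vertex of $S_r$ is joined by an edge to every vertex of its associated copy.

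The vertex count is then immediate: $|V(G)|=|V(S_r)|+|V(S_r)|\cdot|V(K_n)|=(r+1)+(r+1)n=(n+1)(r+1)$, which settles the second assertion.

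For the hyperedge set I would show that the maximal cliques split into exactly two types. Take any vertex $v\in V(S_r)$ together with its attached copy $K_n^{(v)}$: since $v$ is adjacent to all $n$ vertices of $K_n^{(v)}$ and those $n$ vertices are mutually adjacent, the set $\{v\}\cup V(K_n^{(v)})$ induces a $K_{n+1}$. Ranging over the $r+1$ vertices of $S_r$ yields $r+1$ hyperedges of type $K_{n+1}$. The only remaining edges of $G$ are the original star edges $c\ell_i$, and I would verify that each is itself a maximal clique: the neighbours of $c$ are $V(K_n^{(0)})$ together with the leaves $\ell_1,\dots,\ell_r$, while the neighbours of $\ell_i$ are $c$ together with $V(K_n^{(i)})$, so $c$ and $\ell_i$ share no common neighbour and $c\ell_i$ cannot extend to a larger clique. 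This gives $r$ hyperedges of type $K_2$.

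Collecting the two types yields $E(G)=\{K_{n+1}\ ((r+1)\text{ times}),\ K_2\ (r\text{ times})\}$, as claimed. The only genuinely non-routine point is the maximality check for the pendant edges $c\ell_i$, namely confirming that they are not absorbed into any $K_{n+1}$; this follows from the disjointness of the two neighbourhoods noted above. Everything else is a direct transcription of the corona definition together with the paper's clique-as-hyperedge convention.
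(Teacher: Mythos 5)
Your proof is correct, and it supplies details the paper itself omits: the paper states this lemma with no proof at all, treating it as immediate from the corona-product definition together with its convention that each maximal complete subgraph is a hyperedge. Your unfolding of that definition, the vertex count $(r+1)+(r+1)n=(n+1)(r+1)$, and in particular the maximality check showing each star edge $c\ell_i$ cannot be absorbed into any $K_{n+1}$ (since $c$ and $\ell_i$ share no common neighbour) is exactly the argument the paper implicitly relies on, just written out in full.
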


\begin{theorem}
Let $G=G_1\odot G_2$ be corona product of $G_1=S_r$ and $G_2=K_n$. Then $G$ contains $2r+1$ hyperedges and $d_h(\epsilon)=\begin{cases}
       r+1 & \text{;if $\epsilon$ is $K_2$ (the pendent edge of $S_r$)}\\
       r & \text{;if $\epsilon$ is the $K_{n+1}$ attached to the center}\\
       1 & \text{;otherwise}\\
    \end{cases}$  and $HFGZI(G)=HFGZI(G_1)+4r$ and $HM_1(G)=HM_1(G_1)+5r^2+r$
\end{theorem}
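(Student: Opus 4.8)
The plan is to read off the hyperedge structure from Lemma~3.3, compute the hyperedge-degree $d_h(v)$ of every vertex by counting the hyperedges that contain it, apply the definition $d_h(\epsilon)=\sum_{v\in\epsilon}d_h(v)-|\epsilon|$ to each kind of hyperedge, and finally sum $d_h(\epsilon)$ and $d_h^2(\epsilon)$ over all hyperedges, matching the totals against the closed forms for $HFGZI(S_r)$ and $HM_1(S_r)$ from Lemma~2.1.

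By Lemma~3.3 the hyperedges of $G=S_r\odot K_n$ are the $r+1$ complete subgraphs $K_{n+1}$, each formed by one vertex of the star together with its attached copy of $K_n$, and the $r$ star edges $K_2$, giving $2r+1$ hyperedges in total. First I would fix notation by letting $c$ denote the center of $S_r$ and $\ell_1,\dots,\ell_r$ its leaves. Counting how many hyperedges contain each vertex then gives $d_h(c)=r+1$ (the single $K_{n+1}$ around the center together with the $r$ star edges), $d_h(\ell_i)=2$ (the $K_{n+1}$ around $\ell_i$ together with the star edge $c\ell_i$), while every vertex coming from one of the copies of $K_n$ lies in exactly one hyperedge and so has $d_h=1$.

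Substituting these values into the definition, I would split into three cases. For the $K_{n+1}$ around the center, $d_h(\epsilon)=(r+1)+n\cdot 1-(n+1)=r$; for a $K_{n+1}$ around a leaf, $d_h(\epsilon)=2+n\cdot 1-(n+1)=1$; and for a star edge $c\ell_i$, $d_h(\epsilon)=(r+1)+2-2=r+1$, which reproduces the three-line formula in the statement. Summing over the $r$ star edges, the one central $K_{n+1}$ and the $r$ leaf copies then yields
\[
HFGZI(G)=r(r+1)+r+r=r^2+3r,
\]
and since Lemma~2.1 gives $HFGZI(S_r)=r(r-1)$ this equals $HFGZI(G_1)+4r$. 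Likewise
\[
HM_1(G)=r(r+1)^2+r^2+r=r^3+3r^2+2r=HM_1(S_r)+5r^2+r,
\]
using $HM_1(S_r)=r(r-1)^2$ from Lemma~2.1.

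The arithmetic is routine; the one place where care is needed is recognizing that the $r+1$ copies of $K_{n+1}$ are not interchangeable. The copy attached to the center inherits the center's large hyperedge-degree $r+1$ and hence contributes $d_h=r$, whereas the $r$ copies attached to the leaves contribute only $d_h=1$. Keeping this asymmetry straight --- in particular computing $d_h(c)=r+1$ rather than treating the center as an ordinary $K_{n+1}$ vertex --- is the main obstacle; once it is handled, matching the totals to the star indices of Lemma~2.1 is immediate.
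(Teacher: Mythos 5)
Your proposal is correct and follows essentially the same route as the paper's own proof: identify the $2r+1$ hyperedges from the structural lemma, compute $d_h(v)$ as $r+1$ for the center, $2$ for the leaves, and $1$ for the $K_n$-copy vertices, evaluate $d_h(\epsilon)$ in the same three cases, and sum to get $r^2+3r$ and $r^3+3r^2+2r$, rewritten via $HFGZI(S_r)=r(r-1)$ and $HM_1(S_r)=r(r-1)^2$. The emphasis you place on distinguishing the central $K_{n+1}$ from the leaf copies is exactly the case split the paper makes.
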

\begin{proof}
From lemma(3.4), clear that $n(E)=2r+1$.
Here  $E(G)=\{K_{n+1}((r+1)$ times$),K_2(r$ times$)\}$,
$d_h(v)=\begin{cases}
       r+1 & \text{;if $v$ is the center of $S_r$}\\
       2 & \text{;if $v$ is the pendent vertex of $S_r$}\\
       1 & \text{;otherwise}\\
 \end{cases}$\\
 and $d_h(\epsilon)=\sum_{v\in \epsilon} d_h(v)-|\epsilon|$. Therefore
$d_h(K_2)= 2+r+1-2=r+1$, $d_h(K_{n+1};$one attached to the center$)= (r+1)+1+1+...+1(n$ times$)-(n+1)=r$ and $d_h(K_{n+1};$except one attached to the center$)= 2+1+1+...+1(n$ times$)-(n+1)=1$. So,\\
  $d_h(\epsilon)=\begin{cases}
       r+1 & \text{;if $\epsilon$ is $K_2$ (the pendent edge of $S_r$)}\\
       r & \text{;if $\epsilon$ is the $K_{n+1}$ attached to the center}\\
       1 & \text{;otherwise}
 \end{cases}$\\
 $HFGZI(G)=\sum_{\epsilon \in G} d_h(\epsilon)
    =r\times(r+1)+1\times r+r \times 1 
    =r^2+3r=r(r-1)+4r=HFGZI(G_1)+4r$\\
 $HM_1(G)=\sum_{\epsilon \in G} d_h^2(\epsilon)
    =r\times(r+1)^2+1\times r^2+r \times 1^2 
    =r(r-1)^2+5r^2+r=HM_1(G_1)+5r^2+r$\\
\end{proof}

\begin{lemma}
Let $G_1=K_n$ and $G_2=K_m$ then corona product $G=G_1\odot G_2$ of hypergraphs $G_1$ and $G_2$ is a hypergraph with edge set $E(G)=\{K_{m+1}(n$ times$),K_n\}$ and $|V(G)|=n(m+1)$.
\end{lemma}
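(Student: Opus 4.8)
The plan is to read off both assertions directly from the construction of the corona product, in the same spirit as the cartesian- and corona-product lemmas proved earlier in this section. First I would fix notation: write the single copy of $G_1 = K_n$ with centre vertices $v_1, \dots, v_n$, and write $(K_m, i)$ for the $i$th of the $n$ attached copies of $G_2 = K_m$, for $i = 1, \dots, n$. By the definition of $\odot$ recalled at the start of this section, the only edges beyond those internal to $K_n$ and to each $(K_m,i)$ are the edges joining $v_i$ to every vertex of $(K_m, i)$.

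For the vertex count the step is purely additive: I would add the $n$ centre vertices to the $m$ vertices contributed by each of the $n$ copies of $K_m$, giving $|V(G)| = n + nm = n(m+1)$, which is the second assertion.

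For the edge set the key step is to identify every maximal complete subgraph, since throughout this section complete subgraphs are treated as hyperedges. I would argue two things. First, for each $i$ the vertex $v_i$ is adjacent to all $m$ vertices of $(K_m, i)$, and those $m$ vertices are already mutually adjacent, so $\{v_i\} \cup V((K_m,i))$ induces a $K_{m+1}$; letting $i$ range over $1,\dots,n$ produces $n$ hyperedges of type $K_{m+1}$. Second, the centres $v_1, \dots, v_n$ span the original $K_n$, which is complete and hence a hyperedge $K_n$. Together these give $E(G) = \{K_{m+1}\ (n\text{ times}),\, K_n\}$, the first assertion.

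The only point that genuinely needs checking, and hence the main obstacle (modest as it is), is maximality: that no two of these cliques merge into a larger complete subgraph and that none is omitted. I would verify that the $K_{m+1}$ built on $v_i$ cannot be enlarged, because any other centre $v_j$ with $j \neq i$ is not adjacent to the vertices of $(K_m, i)$, and vertices of distinct copies $(K_m, i)$ and $(K_m, j)$ are never adjacent; similarly the central $K_n$ cannot absorb any vertex of a copy $(K_m, i)$, since such a vertex meets only $v_i$ among the centres. This confirms that the listed cliques are exactly the hyperedges of $G$, which completes the lemma and sets up the degree computation to be carried out in the following theorem.
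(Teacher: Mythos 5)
Your proposal is correct: the vertex count $n + nm = n(m+1)$ and the identification of the hyperedges as the $n$ cliques $\{v_i\}\cup V((K_m,i))\cong K_{m+1}$ together with the central $K_n$ are exactly what the lemma asserts. The paper itself offers no proof of this lemma (it is stated as immediate from the definition of the corona product, and only invoked in the theorem that follows), so your argument is simply the natural filling-in of that omitted verification — the same approach the paper intends, with the additional and worthwhile check that these cliques are maximal and exhaust the hyperedges.
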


\begin{theorem}
Let $G=G_1\odot G_2$ be corona product of $G_1=K_n$ and $G_2=K_m$. Then $G$ contains $n+1$ hyperedges and $d_h(\epsilon)=\begin{cases}
       1 & \text{;if $\epsilon$ is $K_{m+1}$}\\
       n & \text{;if $\epsilon$ is $K_n$}\\
    \end{cases}$  and $HFGZI(G)=2n(G_1)$ and $HM_1(G)=n(G_1)[n(G_1)+1]$
\end{theorem}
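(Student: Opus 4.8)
The plan is to mimic the structure used for the earlier corona product $S_r\odot K_n$: start from the hyperedge description supplied by the preceding lemma, read off the vertex degrees $d_h(v)$, push these through the hyperedge-degree formula $d_h(\epsilon)=\sum_{v\in\epsilon}d_h(v)-|\epsilon|$ for each class of hyperedge, and then sum to obtain the two indices.

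First I would invoke Lemma~3.6, which states that $E(G)=\{K_{m+1}\,(n\text{ times}),K_n\}$. This at once gives $n(E)=n+1$ and splits the hyperedges into two classes: the $n$ copies of $K_{m+1}$, one attached at each vertex of the base $K_n$, and the single base $K_n$ itself.

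Next I would compute $d_h(v)$ for every vertex, which is the only place demanding care. The key structural fact is how the two classes of hyperedges overlap: each vertex $v_i$ of the base $K_n$ belongs to exactly two hyperedges, namely the base $K_n$ and the copy of $K_{m+1}$ hung at $v_i$, whereas every vertex arising from a copy of $K_m$ belongs to the single $K_{m+1}$ that contains it. Hence
\[
d_h(v)=\begin{cases}2 & \text{if } v\in V(K_n),\\ 1 & \text{otherwise.}\end{cases}
\]

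With the degrees fixed, the rest is arithmetic. Applying the formula to the base $K_n$ gives $d_h(K_n)=2n-n=n$, and applying it to an attached $K_{m+1}$ (one base vertex of degree $2$ together with $m$ vertices of degree $1$) gives $d_h(K_{m+1})=(2+m)-(m+1)=1$, matching the stated case formula. Summing over all $n+1$ hyperedges then yields $HFGZI(G)=n\cdot 1+1\cdot n=2n$ and $HM_1(G)=n\cdot 1^2+1\cdot n^2=n(n+1)$, i.e.\ $2n(G_1)$ and $n(G_1)[n(G_1)+1]$ since $n(G_1)=|V(K_n)|=n$. The main (and rather mild) obstacle is verifying the overlap count that forces $d_h(v_i)=2$ on the base vertices; once that is settled, both indices drop out by direct summation.
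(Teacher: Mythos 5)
Your proposal is correct and follows essentially the same route as the paper's own proof: invoke the preceding lemma for the edge set $E(G)=\{K_{m+1}\,(n\text{ times}),K_n\}$, determine $d_h(v)=2$ on $V(K_n)$ and $d_h(v)=1$ elsewhere, apply $d_h(\epsilon)=\sum_{v\in\epsilon}d_h(v)-|\epsilon|$ to each class of hyperedge, and sum. Your version is in fact slightly more explicit than the paper's, since you spell out the overlap argument for $d_h(v)=2$ and the arithmetic $d_h(K_{m+1})=(2+m)-(m+1)=1$, which the paper simply asserts.
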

\begin{proof}
From lemma(3.5), clear that $n(E)=n+1$.
Here  $E(G)=\{K_{m+1}(n$ times$),K_n\}$,\\
$d_h(v)=\begin{cases}
       2 & \text{;if $v \in V(K_n)$}\\
       1 & \text{;if $v \in V(K_m)$}\\
 \end{cases}$ and $d_h(\epsilon)=\sum_{v\in \epsilon} d_h(v)-|\epsilon|$. Therefore
$d_h(K_n)= 2+2+...+2(n$ times$)-n=n$ and $d_h(K_{m+1})=1$. So,\\
  $d_h(\epsilon)=\begin{cases}
       1 & \text{;if $\epsilon$ is $K_{m+1}$}\\
       n & \text{;if $\epsilon$ is $K_n$}\\
    \end{cases}$ \\
 $HFGZI(G)=\sum_{\epsilon \in G} d_h(\epsilon)
    =n\times1+1\times n=2n=2n(G_1)$\\
 $HM_1(G)=\sum_{\epsilon \in G} d_h^2(\epsilon)
    =n\times1^2+1\times n^2=n+n^2=n(G_1)[n(G_1)+1]$
\end{proof}

\section{Conclusion}
The brain is the primary organ that regulates all body functions. Numerous functions are controlled by the brain. Each function is regulated by multiple regions, and each region contains multiple functions. In this context, hypergraphs are more useful than standard or conventional graphs. Normal graphs only indicate whether neurons or brain regions are functionally connected or not; it is uncertain which function links these neurons.\\ 
So that this study represented the brain as a hypergraph (brain regions as nodes, and each function as a hyperedge). $d_h(\epsilon)$ value indicate the intensity of interconnections therefore introduction of these parameter is useful in brain network analysis.

\
\end{document}